\newtheorem{prop}{Proposition}
\newcommand{\be}{\begin{equation}}
\newcommand{\ee}{\end{equation}}
\newcommand{\ba}{\begin{array}}
\newcommand{\ea}{\end{array}}
\newcommand{\bea}{\begin{eqnarray}}
\newcommand{\eea}{\end{eqnarray}}
\newcommand{\vbar}{\raisebox{.17ex}{\rule{.04em}{1.35ex}}}
\newcommand{\vbarind}{\raisebox{.01ex}{\rule{.04em}{1.1ex}}}
\newcommand{\R}{\ifmmode {\rm I}\hspace{-.2em}{\rm R} \else ${\rm I}\hspace{-.2em}{\rm R}$ \fi}
\newcommand{\T}{\ifmmode {\rm I}\hspace{-.2em}{\rm T} \else ${\rm I}\hspace{-.2em}{\rm T}$ \fi}
\newcommand{\N}{\ifmmode {\rm I}\hspace{-.2em}{\rm N} \else \mbox{${\rm I}\hspace{-.2em}{\rm N}$} \fi}
\newcommand{\B}{\ifmmode {\rm I}\hspace{-.2em}{\rm B} \else \mbox{${\rm I}\hspace{-.2em}{\rm B}$} \fi}
\newcommand{\Hil}{\ifmmode {\rm I}\hspace{-.2em}{\rm H} \else \mbox{${\rm I}\hspace{-.2em}{\rm H}$} \fi}
\newcommand{\C}{\ifmmode \hspace{.2em}\vbar\hspace{-.31em}{\rm C} \else \mbox{$\hspace{.2em}\vbar\hspace{-.31em}{\rm C}$} \fi}
\newcommand{\Cind}{\ifmmode \hspace{.2em}\vbarind\hspace{-.25em}{\rm C} \else \mbox{$\hspace{.2em}\vbarind\hspace{-.25em}{\rm C}$} \fi}
\newcommand{\Q}{\ifmmode \hspace{.2em}\vbar\hspace{-.31em}{\rm Q} \else \mbox{$\hspace{.2em}\vbar\hspace{-.31em}{\rm Q}$} \fi}
\newcommand{\Z}{\ifmmode {\rm Z}\hspace{-.28em}{\rm Z} \else ${\rm Z}\hspace{-.28em}{\rm Z}$ \fi}
\def\R{\mathds{R}}
\def\N{\mathds{N}}
\begin{document}
%
\title{Auction-based Resource Allocation in \\MillimeterWave Wireless Access Networks}
\author{
        George Athanasiou,~\IEEEmembership{Member,~IEEE,}
        Pradeep Chathuranga Weeraddana,~\IEEEmembership{Member,~IEEE},
        and Carlo Fischione,~\IEEEmembership{Member,~IEEE,}

\thanks{The authors are with Electrical Engineering School, Access Linnaeus Center, KTH Royal Institute of Technology, Stockholm, Sweden. E-mail:\textit{ \{georgioa, chatw, carlofi\}@kth.se}. This work was supported by the Swedish Research Council and the EU project Hydrobionets.}}
\maketitle
\vspace{-0.3cm}
\begin{abstract}
The resource allocation problem of optimal assignment of the clients to the available
access points in 60 GHz millimeterWave wireless access networks is investigated. The problem
is posed as a multi-assignment optimization problem.
The proposed solution method converts the initial problem to a minimum cost flow problem and allows to design an efficient algorithm by
a combination of auction algorithms. The solution algorithm exploits the network optimization structure of the problem, and thus is much more powerful than computationally intensive general-purpose solvers. Theoretical and numerical results evince numerous properties, such as optimality, convergence, and scalability in comparison to existing approaches.

\end{abstract}

\IEEEpeerreviewmaketitle

\vspace{-0.3cm}
\section{Introduction}\vspace{-0.2cm}
MillimetterWave (mmW) communications utilize the part of the electromagnetic spectrum between 30 and 300~GHz, which corresponds to wavelengths from 10~mm to 1~mm~\cite{Smulders07}. MmW wireless networks in the 60 GHz unlicensed band are considered one of the key technologies for enabling multi-gigabit wireless access (transmission rates up to 7 Gbps) and provisioning of QoS-sensitive applications. Multiple industry-led efforts and international organizations have emerged for the standardizationc. More than 5~GHz of continuous bandwidth is available in many countries worldwide, which makes 60~GHz systems particularly attractive for gigabit wireless applications such as gigabyte file transfer, wireless docking station, wireless gigabit ethernet, wireless gaming, and uncompressed high definition video transmission. Moreover, scenarios such as dense small-cells and mobile data offloading~\cite{Lee12}, which are nowadays strongly motivated by the increased end-user connectivity requirements and mobile traffic, can be accommodated with the use of 60~GHz radio access technology.

Resource allocation for wireless local area networks has been the focus of intense research. Several studies have analyzed the performance of the basic client association policy that IEEE 802.11 standard defines, based on the received signal strength indicator (RSSI). These studies have showed that this basic association policy can lead to inefficient use of the network resources \cite{Bejerano2}. Therefore, there has been increasing interest in designing better client association policies \cite{Athanasiou07, Athanasiou09, Shakkottai, Chen13}. Whereas the previous approaches are hard to apply in 60 GHz wireless access networks due to the special characteristics of the 60 GHz channel, and the differences with the rest wireless access technologies \cite{Qiao11, Singh11, Genc12, Lin12} (namely, severe channel attenuations, high path loss, directionality, and blockage), novel mechanisms must be designed to provide optimal resource allocation. Our previous approach \cite{Athanasiou-etal-2013} was the first to study the client association in 60 GHz wireless access networks. However, the focus was on the network performance (achieving load balancing) and not on optimizing the benefit of the individual clients.

This paper considers the special characteristics of the 60 GHz access channel and poses the client association optimization problem, where the objective is to \emph{maximize the total clients benefit} in the network. To address the problem, we propose an iterative approach that combines two auction algorithms. We compare our solution method to basic association policies, already in use in the present 60 GHz communication technologies under standardization (802.15.3c, 802.11ad)~\cite{802_11ad}.

The rest of the paper is organized as follows. A description of the system model and the problem formulation is presented in \S~\ref{sec:SysModel_mini_max_primal_problem}. In \S~\ref{sec:sol_prob}, we describe the solution approach to the multi-assignment problem. In \S~\ref{sec:numerical_results} numerical results are presented. Lastly, \S~\ref{sec:conclusions} concludes the paper.

\vspace{-0.5cm}
\section{System Model and Problem Formulation}\label{sec:SysModel_mini_max_primal_problem}\vspace{-0.2cm}

We consider a mmW network where $m$ access points (APs) that can serve $n$ clients and $n\geq m$. An AP $i$ can serve more than one client. Moreover, every client $j$ must be associated to just one AP.
The set of clients to which AP $i$ can be assigned is a nonempty set $A(i)$. Moreover, we introduce the set $B(j)$ as the nonempty set of APs that can serve client $j$. An assignment $S$ is defined as a set of AP-client pairs $(i, j)$, with $j \in A(i)$, where each AP $i$ can be part of more than one pair $(i,j) \in S$, and where every client $j$ must be part of only one pair $(i,j) \in S$. 
An illustrative example of access network is shown in Figure~\ref{fig:system_example}, where the clients positioned inside a disc with radius $r$ centered at the location of AP~$i$ can be associated with that AP.

\begin{figure}[t]
\centering
\includegraphics[height=0.23\textheight]{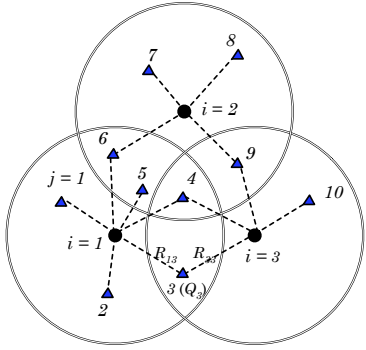}\vspace{-5mm}
\caption{Example mmW wireless access network.}
\label{fig:system_example}
\vspace{-0.8cm}
\end{figure}

Every node is equipped with steerable directional antennas and it can direct its beams to transmit or to receive \cite{Qiao11}. We assume that AP~$i$ can support its clients with a separate transmit beam. We consider the case where all receiver nodes are using single-user detection (i.e., a receiver decodes each of its intended signals by treating all other interfering signals as noise) and assume that the achievable rate from AP~$i$ to client~$j \in A(i)$ is
\vspace{-2mm}\begin{equation} \label{eq:AWGN}
R_{ij} = W\log_2\bigg (1+\frac{P_{ij}G_{ij}}{(N_0+I_{j})W}\bigg)\ ,
\vspace{-2mm}\end{equation}
where $W$ is the system bandwidth, $P_{ij}$ is the transmission power of AP~$i$ to client~$j$, $G_{ij}$ is the power gain from AP~$i$ to client~$j$, $N_0$ is the power spectral density of the noise at each receiver, and $I_{j}$ is the interference spectral density at client~$j$.  We use the Friis transmission equation together with the~\emph{flat-top} transmit/recieve antenna gain model~\cite{Singh11}, where a fixed gain is considered within the beamwidth and zero gain is considered outside the beamwidth of the antenna. 
%
We capitalize on the well studied 60~GHz propagation characteristics~\cite{Singh11}, such as highly directional transmissions with very narrow beamwidths and increased path losses due to the oxygen absorption, in order to assume that the communication interference $I_j$ is very small and does not affect significantly the achievable communication rates in the network. We remark that all the assumptions that we make above are natural for 60~GHz~\cite{Singh11}.

We denote by $Q_j$ the \emph{demanded data rate} of client~$j$. The \emph{benefit} of client client~$j$ that is associated with AP~$i$ is given by the ratio of ${R_{ij}/{Q_j}}$.
The general objective is to find an assignment that \emph{maximizes} the sum of such benefits, namely the \emph{total benefit} of the network. Therefore, the association problem is modelled by the following linear optimization problem
\vspace{-5mm}\begin{subequations} \label{objfunc_multia}
\begin{align}
\max_{x_{ij}} & \sum\limits_{(i,j) \in C} \frac{R_{ij}}{Q_j} x_{ij} \\
{\rm s.t.}\,\,
& \sum\limits_{j \in A(i)} x_{ij} \geq 1, \quad \forall i=1,\ldots,m, \label{objfunc_multia1} \\
& \sum\limits_{i \in B(j)} x_{ij} = 1, \quad \forall j=1,\ldots,n, \label{objfunc_multia2} \\
& x_{ij} \geq 0, \quad \forall (i,j) \in C
\end{align}
\end{subequations}
The objective function of~\eqref{objfunc_multia} is the total network benefit, where $C$ is the set of all possible AP-client assignment pairs $(i,j)$ (note that $S$ is a subset of $C$) and $(x_{ij})_{j \in A(i)}$ are binary decision variables, indicating the client association. In particular, $x_{ij}=1$ if client $j$ is associated to AP $i$ and $x_{ij}=0$ otherwise, for all $i$ and~$j \in A(i)$. \eqref{objfunc_multia1} and \eqref{objfunc_multia2} ensure that each AP will be assigned to one or more clients and each client will be associated to one AP. Note that from an assignment $S$, we can potentially recover a solution to problem~\eqref{objfunc_multia} by setting $x_{i,j}=1$ if $(i,j)\in S$ and $x_{i,j}=0$ otherwise. An assignment that gives a feasible solution to problem~\eqref{objfunc_multia} is therefore defined as feasible assignemnt. 
In what follows, we present the proposed solution approach.

\vspace{-0.3cm}
\section{Solution Approach}\label{sec:sol_prob}\vspace{-0.2cm}

The considered problem \eqref{objfunc_multia} is a classical \textit{multi-assignment problem}, where an AP can be assigned to more than one client. Unfortunately, there are no specialized \textit{network flow methods} that can efficiently solve this class of assignment-like problems. There are approaches that apply general purpose network methods such as \textit{primal-simplex}, \textit{primal-dual}, or \textit{relaxation methods}, which may have high complexity ~\cite{Boyd-Vandenberghe-04}. Moreover, general methods for linear optimization, such as the simplex or even interior point methods, do not exploit the particular structure of the considered multi-assigment problem at hand (a network optimization structure) and are not amenable for distributed computation. Thus they are generally less efficient than network optimization methods \cite{Bertsekas-98}. Consequently, we resort to network optimization theory and propose a solution method that combines \textit{auction} algorithms to solve efficiently problem \eqref{objfunc_multia}.

We start by converting problem \eqref{objfunc_multia} into a typical minimum cost flow problem~\cite{Bertsekas-98} by introducing a virtual supernode $s$ that is connected to each AP $i$\footnote{We consider a network where supernode $s$ generates $n-m$ units of traffic and is connected to each AP $i$ by a zero cost arc $(s, i)$. The traffic that is generated at each AP (\emph{supply}) is of one unit. AP $i$ is connected to client $j$ by an arc $(i, j)$ with cost $-R_{ij}/Q_j$.}
\vspace{-2mm}\begin{subequations} \label{objfunc_flow}
\begin{align}
\min_{x_{ij}} & \sum\limits_{(i,j) \in C} \frac{-R_{ij}}{Q_j} x_{ij}  \\
{\rm s.t.}\,\,
& \sum\limits_{j \in A(i)} x_{ij} - x_{si} = 1, \quad \forall i=1,\ldots,m, \label{objfunc_flow1} \\
& \sum\limits_{i=1}^m x_{si} = n - m, \label{objfunc_flow2}\\
& \sum\limits_{i \in B(j)} x_{ij} = 1, \quad \forall j=1,\ldots,n, \label{objfunc_flow3}\\
& x_{ij} \geq 0, \quad \forall (i,j) \in C, \\
& x_{si} \geq 0, \quad \forall i=1,\ldots,n
\end{align}
\end{subequations}
where the sign of the benefit was reversed (cost coefficient) compared to problem \eqref{objfunc_multia}, \emph{minimization} replaced the \emph{maximization} and $x_{ij}$ was extended to include also the supernode $s$. By using the terminology of network optimization, $x_{i,j}$ has the meaning of amount of flow between $i$ and $j$, and the first constraint ensures that the flow \emph{supply} of each AP $i$ is one unit, while the second one declares that $s$ is the source node and the flow that generates is of $n-m$ units. Therefore, a flow of one unit will reach each client $j$. The last two constraints declare that the flow of each arc may be infinite, where an arc between $i$ and $j$ denotes the connection $(i,j)$. A solution to the minimum cost flow problem \eqref{objfunc_flow} is the same to the initial multi-assignment problem \eqref{objfunc_multia}.

By using the duality theory for minimum cost network flow problems \cite[\emph{\S 4.2}]{Bertsekas-98} we formulate the dual problem
\vspace{-4mm}\begin{subequations} \label{objfunc_flowd}
\begin{align}
\min_{\pi_i, p_j, \lambda} & \sum\limits_{i=1}^m \pi_i + \sum\limits_{j=1}^n p_j + (n-m)\lambda  \\
{\rm s.t.}\,\,
& \pi_i + p_j \geq \frac{R_{ij}}{Q_j}, \quad \forall (i,j) \in C, \\
& \lambda \geq \pi_i, \quad \forall i=1,\ldots,m
\vspace{-2mm}\end{align}
\end{subequations}
where $-\pi_i$ is the Lagrangian multiplier associated with constraint \eqref{objfunc_flow1} representing the price of each AP $i$, $\lambda$ is the Lagrangian multiplier associated with constraint \eqref{objfunc_flow2} representing the price of the supernode $s$ (recall that $s$ is the source of the flows), and $p_j$ is the Lagrangian multiplier associated with constraint \eqref{objfunc_flow3} representing the price of each client $j$. The optimal solution to problem \eqref{objfunc_flowd} allows us to derive the optimal solution to \eqref{objfunc_multia} \cite[\emph{\S 4.2, \S 5}]{Bertsekas-98}.

In order to solve problem \eqref{objfunc_flowd} we need some technical intermediate results. We start by giving the definition of $\epsilon$-\textit{Complementary Slackness} ($\epsilon-CS$): Let $\epsilon$ be a positive scalar, we say that an assignment $S$ and a pair ($\pi, p$) satisfy $\epsilon-CS$ if
\vspace{-2mm}\begin{subequations} \label{e-cs}
\begin{align}
& \pi_i + p_j \geq \frac{R_{ij}}{Q_j} - \epsilon, \quad \forall (i,j) \in C,\\
& \pi_i + p_j = \frac{R_{ij}}{Q_j}, \quad \forall (i,j) \in S,\\ \label{e-cs-3}
& \pi_i = \max_{k=1,\ldots,m} \pi_k, \quad \forall i \mbox{\ s.t. $i$ has more than one pair\ } (i,j) \in S
\vspace{-2mm}\end{align}
\end{subequations}

\begin{prop}\label{prop:ecs}
Consider problems \eqref{objfunc_multia} and \eqref{objfunc_flowd}. Let $S$ be a feasible solution for problem~\eqref{objfunc_multia} and consider a dual variable pair ($\pi, p$). Let $\epsilon<1/m$ and assume $R_{ij}/Q_j$ be integer $\forall i,j$. If $\epsilon-CS$ conditions \eqref{e-cs} are satisfied by $S$ and $\pi, p$, then $S$ is optimal for problem \eqref{objfunc_multia}.
\end{prop}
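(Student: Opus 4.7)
The plan is to follow the standard weak-duality template for $\epsilon$-complementary-slackness optimality results. The idea is to compare the primal value of $S$ against that of any feasible assignment $S^\star$ by extracting a feasible dual point of problem~\eqref{objfunc_flowd} from $(\pi,p)$ and then finishing the argument with the integrality of $R_{ij}/Q_j$.

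The first step is to construct a feasible triple $(\tilde\pi,\tilde p,\tilde\lambda)$ for~\eqref{objfunc_flowd}. The natural choice is $\tilde\pi_i=\pi_i$, $\tilde p_j=p_j+\epsilon$, and $\tilde\lambda=\max_k\pi_k$: the shift on $\tilde p$ upgrades the first $\epsilon$-CS inequality into the dual constraint $\tilde\pi_i+\tilde p_j\ge R_{ij}/Q_j$ for every $(i,j)\in C$, while the choice of $\tilde\lambda$ secures $\tilde\lambda\ge\tilde\pi_i$. The resulting dual objective value is $\sum_i\pi_i+\sum_j p_j+n\epsilon+(n-m)\max_k\pi_k$. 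On the primal side, using the equality in~\eqref{e-cs} that holds on the pairs of $S$,
$$\text{Primal}(S)\;=\;\sum_{(i,j)\in S}\frac{R_{ij}}{Q_j}\;=\;\sum_{(i,j)\in S}(\pi_i+p_j)\;=\;\sum_{i=1}^m k_i\pi_i+\sum_{j=1}^n p_j,$$
with $k_i=|\{j:(i,j)\in S\}|\ge 1$ and $\sum_i(k_i-1)=n-m$. A direct subtraction then rewrites the duality gap as
$$\text{Dual}(\tilde\pi,\tilde p,\tilde\lambda)-\text{Primal}(S)\;=\;\sum_{i=1}^m(k_i-1)\bigl(\max_k\pi_k-\pi_i\bigr)+n\epsilon.$$

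The crucial step invokes the third condition in~\eqref{e-cs}: every index with $k_i>1$ satisfies $\pi_i=\max_k\pi_k$, collapsing that term, while every $i$ with $k_i=1$ has $k_i-1=0$. The first sum therefore vanishes identically and the duality gap equals exactly $n\epsilon$. Weak LP duality then gives $\text{Primal}(S^\star)\le\text{Primal}(S)+n\epsilon$ for every feasible $S^\star$, including the optimal one. Because each $R_{ij}/Q_j$ is an integer, both primal values are integer-valued, so the nonnegative integer $\text{Primal}(S^\star)-\text{Primal}(S)$ is forced to zero under the hypothesis on $\epsilon$, proving optimality of $S$.

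The main obstacle is precisely the cancellation enabled by the third $\epsilon$-CS condition. Without it, the multi-assignment contribution $\sum_i(k_i-1)(\max_k\pi_k-\pi_i)$ has indeterminate sign and magnitude, and the integer-rounding conclusion collapses. The technical heart of the argument is therefore the observation that the third $\epsilon$-CS condition is exactly what forces the choice $\tilde\lambda=\max_k\pi_k$ to be a valid substitute for the dual multiplier of the supernode's supply constraint~\eqref{objfunc_flow2}, making the converted network-flow formulation~\eqref{objfunc_flow} and its dual match up to the prescribed $\epsilon$ tolerance.
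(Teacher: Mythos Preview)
Your weak-duality argument is clean and the computation of the gap is correct: the shift $\tilde p_j=p_j+\epsilon$ yields a dual objective exceeding $\text{Primal}(S)$ by exactly $n\epsilon$, and the third $\epsilon$-CS condition does kill the multi-assignment residual $\sum_i(k_i-1)(\max_k\pi_k-\pi_i)$. The problem is the last step. You conclude that the nonnegative integer $\text{Primal}(S^\star)-\text{Primal}(S)\le n\epsilon$ must vanish ``under the hypothesis on $\epsilon$'', but the stated hypothesis is $\epsilon<1/m$, not $\epsilon<1/n$. Since the paper assumes $n\ge m$ throughout, $n\epsilon$ can easily exceed $1$ (e.g.\ $m=10$, $n=100$, $\epsilon=0.09$ give $n\epsilon=9$), and the integrality rounding collapses. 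As written, your argument therefore proves the proposition only under the stronger assumption $\epsilon<1/n$.

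The paper avoids this by arguing by contradiction along a single improving augmenting cycle in the min-cost-flow reformulation~\eqref{objfunc_flow}. Such a cycle visits at most $k\le m$ APs, so the $\epsilon$-slack from the first $\epsilon$-CS inequality is invoked only $k-1<m$ times along the cycle; combined with the third $\epsilon$-CS condition on the multi-assigned endpoint $i_k$, this yields the contradiction $1\le(k-1)\epsilon<1$ precisely when $\epsilon<1/m$. The structural point missing from the global duality-gap approach is that if $S$ fails to be optimal, it fails along a local direction of length at most $m$; a duality bound does not see this and pays an $\epsilon$ once per client rather than once per AP on the cycle. To rescue your approach under the given hypothesis you would need to localize the comparison to such cycles rather than bounding $S^\star$ against $S$ globally.
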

\begin{proof}
The proof is ad-absurdum. Assuming that $S$ is not optimal, then there is a new assignment that can improve the objective function \eqref{objfunc_flowd} and can give us a new solution: Let $E$ be a cycle, namely a collection of arcs that start and end with the same node, that includes also the supernode $s$: $E=(s, i_1, j_2, i_2,..., i_{k-1}, j_k, i_k, s)$. In this solution, the nodes $i_t$ represent the APs, while the nodes $j_t$ represent the clients and $(i_t, j_t) \in S, j_t \in A(i_{t-1}), (i_{t-1}, j_t) \notin S, t=2,...,k$. Based on \emph{max-flow theory} \cite[\emph{\S 3}]{Bertsekas-98}, augmentation along $E$ is achieved by replacing $(i_t, j_t) \in S$ by $(i_{t-1}, j_t)$ in $S$, $t=2,...,k$. AP $i_k$ must be assigned to more than one clients prior to the previous operation because the arc $(i_k, j_k)$ will exit the assignment and therefore, the AP $i_k$ will be left unassigned. This will result to an infeasible solution to problem \eqref{objfunc_flowd}. Moreover, $k\leq m$ since $E$ cannot contain repeated clients. Considering also that $\epsilon<1/m$ we conclude that $k\epsilon<1$.

Since we achieved strict cost improvement in the previous operation, we have
\vspace{-2mm}\begin{equation} \label{ap1}
\sum_{t=2}^k \frac{R_{i_tj_t}}{Q_{j_t}} + 1 \leq \sum_{t=2}^k \frac{R_{i_tj_t}}{Q_{j_t}},
\vspace{-2mm}\end{equation}
In order to reveal the $\epsilon-CS$ conditions \eqref{e-cs}, we transform \eqref{ap1} as
\vspace{-2mm}\begin{equation} \label{ap2}
\sum_{t=2}^k \left(\frac{R_{i_tj_t}}{Q_{j_t}}-p_{j_t}\right) + 1 \leq \sum_{t=2}^k \left(\frac{R_{i_tj_t}}{Q_{j_t}}-p_{j_t}\right).
\vspace{-2mm}\end{equation}
Now using the $\epsilon-CS$ conditions \eqref{e-cs}, \eqref{ap2} can be written as
\vspace{-2mm}\begin{equation} \label{ap3}
\sum_{t=2}^k \pi_{i_t} + 1 \leq \sum_{t=2}^k \left(\frac{R_{i_tj_t}}{Q_{j_t}}-p_{j_t}\right)
\leq \sum_{t=1}^{k-1} \pi_{i_t} + (k-1)\epsilon.
\vspace{-2mm}\end{equation}
From~\eqref{ap3} we have $1-(k-1)\epsilon\leq \pi_{i_1}-\pi_{i_k}$ which contradicts $k\epsilon<1$, because AP $i_k$ is assigned to more than one clients, i.e., $\pi_{i_k}\geq \pi_{i_1}$, [compare with \eqref{e-cs-3}]. We conclude that our first assumption on that $S$ is non optimal is wrong, which implies that $S$ is optimal. We can get similar results considering that supernode $s$ is not part of $E$.
\end{proof}

\begin{figure*}[ttt!]
\vspace{-1cm}\begin{minipage}[t]{3.1in} 
\begin{algorithm}[H] \small
\caption{\small Forward Auction for Client Assignment}
\label{alg_f}
\begin{algorithmic}
\REQUIRE Initial values of $S, p$
\ENSURE $R_{ij}/Q_j - p_j \geq  \max_{k\in A(i)} \left\{R_{ij}/Q_j-p_k\right\} - \epsilon, \quad \forall (i,j) \in S$
\WHILE{there are unassigned clients}
\STATE client $j$ is unassigned in $S$
\STATE find the best client $j_i$ such that:\\
$j_i = \arg \max_{j\in A(i)} \left\{R_{ij}/Q_j - p_j\right\},$\\
$u_i = \max_{j\in A(i)} \left\{{R_{ij}/Q_j - p_j}\right\},$\\
$\omega_i = \max_{j\in A(i), j\neq j_i} \left\{{R_{ij}/Q_j} - p_j\right\},$
\IF {$j_i$ is the only client in $A(i)$}
\STATE $\omega_j \rightarrow -\infty$
\ENDIF\\
\STATE $b_{i{j_i}} = p_{j_i}+u_i- \omega_i+\epsilon = R_{ij}/Q_{j_i}- \omega_i+\epsilon$
\STATE $p_j=\max_{i\in P(j)}{b_{ij}}$, where $P(j)$ is the set of APs that client $j$ received a bid,
\STATE remove any pair $(i, j)$, where $j$ was initially assigned to some $i$ under $S$, and add the pair $(i_j, j)$ to $S$ with $i_j=\arg \max_{i\in P(j)}{b_{ij}}$
\ENDWHILE
\end{algorithmic}
\end{algorithm}
\end{minipage}
\hfill
\begin{minipage}[t]{3.1in}
\begin{algorithm}[H] \small
\caption{\small Reverse Auction for Client Assignment}
\label{alg_r}
\begin{algorithmic}
\REQUIRE $S$, ($\pi, p$) and $\lambda$ from forward auction
\ENSURE (1) $\pi_i + p_j \geq R_{ij}/Q_j - \epsilon, \quad \forall (i,j) \in C$ and
 (2) $\pi_i + p_j = R_{ij}/Q_j, \quad \forall (i,j) \in S$
\WHILE{there are unassociated clients}
\STATE client $j$ is unassociated in $S$
\STATE find the best AP $i_j$ such that:\\
$i_j = \arg \max_{i\in B(j)} \left\{R_{ij}/Q_j - \pi_i\right\},$\\
$\beta_j = \max_{i\in B(j)} \left\{R_{ij}/Q_j - \pi_i\right\},$\\
$\omega_j = \max_{i\in B(j), i\neq i_j} \left\{R_{ij}/Q_j - \pi_i\right\},$
\IF {$i_j$ is the only AP in $B(j)$}
\STATE $\omega_j \rightarrow -\infty$
\ENDIF\\
\STATE $\delta = \min{\{\lambda-\pi_{i_j}, \beta_j-\omega_j+\epsilon\}}$
\STATE add $(i_j,j)$ to $S$: $p_j=\beta_j-\delta$, $\pi_{i_j}=\pi_{i_j}+\delta$
\IF {$\delta>0$}
\STATE remove the pair $(i_j, j_{old})$ where $j_{old}$ was initially assigned to $i_j$ under $S$
\ENDIF
\ENDWHILE
\end{algorithmic}
\end{algorithm}
\end{minipage}
\vspace{-0.8cm}
\end{figure*}

Note that in general, $R_{ij}/Q_j$ is not an integer as required by the Proposition~\ref{prop:ecs}, i.e., rounding those to the closest integer value or scalled to an integer value is necessary before running the algorithms. However, in 60 GHz access networks, the effect of rounding influences slightly the true optimal value of Problem~\eqref{objfunc_flowd}, because we can assume $R_{ij}\gg Q_j$, and as a result the fractional part of $R_{ij}/Q_j$ is relatively smaller than the integer part of $R_{ij}/Q_j$.

Based on Proposition~\ref{prop:ecs}, we are now in the position to present the solution method to problem~\eqref{objfunc_flowd} by an auction mechanism. First, a forward auction algorithm associates each AP to one client, see Algorithms~\ref{alg_f}. Then, a modified reverse auction is applied to assign the rest of the clients to the available APs, see Algorithms~\ref{alg_f}. Finally, we show that the execution of the two algorithms terminates with an optimal solution by a finite number of iterations.

In particular, we start from a feasible assignment $S$ and the corresponding $(\pi, p)$ pair that satisfy the first two $\epsilon-CS$ conditions. We apply Algorithm~\ref{alg_f} until each AP is associated with a single client and until the $\epsilon-CS$ conditions are satisfied. At this stage, some of the clients can still be unassigned. We then apply Algorithm~\ref{alg_r} that gets as input the assignment achieved by Algorithm~\ref{alg_f} ($S$ and $(\pi, p)$). We compute the maximum initial profit for the APs
$\lambda = \max_{i=1,\ldots,m} \pi_i$.
The iterative Algorithm \ref{alg_r} maintains an assignment $S$, where each AP is associated with at least one client, and a pair ($\pi, p$) that satisfies the first two $\epsilon-CS$ conditions. Algorithm \ref{alg_r} terminates when all unassigned clients have been assigned to an AP. While $\lambda$ is kept constant through the execution of Algorithm \ref{alg_r}, \eqref{e-cs-3} is satisfied upon~termination.

\begin{prop}\label{prop:opt}
Consider Algorithm \ref{alg_f} and \ref{alg_r}. Let Algorithm \ref{alg_f} run first and then let Algorithm \ref{alg_r} run iteratively. Algorithm \ref{alg_r} terminates in a finite number of iterations with an optimal AP-client assignment when $\epsilon<1/m$.
\end{prop}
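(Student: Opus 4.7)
The plan is to verify that the combined execution of Algorithms~\ref{alg_f} and~\ref{alg_r} maintains the hypotheses of Proposition~\ref{prop:ecs} throughout, terminates in finitely many iterations, and produces a feasible assignment $S$ together with a dual pair $(\pi,p)$ satisfying all three $\epsilon$-CS conditions in~\eqref{e-cs}. Optimality of $S$ for problem~\eqref{objfunc_multia} then follows immediately from Proposition~\ref{prop:ecs}.

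First I would handle Algorithm~\ref{alg_f}. The bid $b_{ij_i} = R_{ij_i}/Q_{j_i} - \omega_i + \epsilon$ is chosen precisely so that, after the update, the winning pair $(i,j_i)$ satisfies $\pi_i + p_{j_i} = R_{ij_i}/Q_{j_i}$ and every non-assigned arc continues to satisfy $\pi_i + p_j \geq R_{ij}/Q_j - \epsilon$; this establishes invariance of the first two $\epsilon$-CS conditions. For finite termination I would invoke the classical auction argument: every successful bid raises the winning client's price by at least $\epsilon$, and since the benefits $R_{ij}/Q_j$ are integer (by the hypothesis of Proposition~\ref{prop:ecs}) and bounded above, no client can be chosen as the best option infinitely often without its price exceeding all competing offers. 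A standard counting argument on bids per client then bounds the total number of iterations.

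Next I would analyze Algorithm~\ref{alg_r}, initialized with the pair $(\pi,p)$ delivered by Algorithm~\ref{alg_f} and $\lambda = \max_i \pi_i$. The update rule $\delta = \min\{\lambda - \pi_{i_j},\,\beta_j - \omega_j + \epsilon\}$ with $p_j \leftarrow \beta_j - \delta$ and $\pi_{i_j} \leftarrow \pi_{i_j} + \delta$ preserves $\pi_i + p_j \geq R_{ij}/Q_j - \epsilon$ globally and enforces equality on every pair in $S$. For termination I would split iterations into two types. In the first, $\delta = \beta_j - \omega_j + \epsilon \geq \epsilon$, so some $\pi_{i_j}$ strictly increases by at least $\epsilon$; since $\pi_i \leq \lambda$ is a rigid upper bound, only finitely many such rounds can occur. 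In the second, $\delta = \lambda - \pi_{i_j}$, so $\pi_{i_j}$ is driven to the cap $\lambda$ and no pair is removed from $S$, meaning a previously unassigned client is placed; this can happen at most $n-m$ times.

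The main obstacle is establishing the third $\epsilon$-CS condition~\eqref{e-cs-3} on termination. Here I would exploit the structural fact that, in Algorithm~\ref{alg_r}, an AP $i$ becomes multiply assigned only in iterations with $\delta = 0$ (i.e., no old pair is removed), and such iterations occur exactly when $\pi_{i_j} = \lambda$. Since $\lambda$ is held constant throughout Algorithm~\ref{alg_r} while $\pi_i$ is monotone nondecreasing and capped by $\lambda$, every AP that ends up with more than one client must satisfy $\pi_i = \lambda = \max_k \pi_k$ at termination. With all three conditions of~\eqref{e-cs} verified on the final $(S,\pi,p)$ and $\epsilon < 1/m$, Proposition~\ref{prop:ecs} delivers the optimality of $S$, completing the proof.
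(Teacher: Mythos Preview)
Your approach matches the paper's: preserve the first two $\epsilon$-CS conditions through both phases, argue finite termination of Algorithm~\ref{alg_r} via the monotone increase of $\pi_i$ toward the fixed cap $\lambda$, and then verify condition~\eqref{e-cs-3} by observing that an AP can only accumulate multiple clients once its profit has reached $\lambda$. The paper's proof is terser but structurally identical, deferring the invariance of $\epsilon$-CS to the standard auction theory in~\cite[\S 7]{Bertsekas-98} and giving essentially the same termination reasoning.

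There is, however, a small slip in your termination case split. In your Type~2, where $\delta = \lambda - \pi_{i_j}$, you assert that ``no pair is removed''; but Algorithm~\ref{alg_r} removes the old pair whenever $\delta>0$, so this is false when $0 < \lambda - \pi_{i_j} < \beta_j - \omega_j + \epsilon$. Such an iteration neither raises $\pi_{i_j}$ by a full $\epsilon$ nor increases the assigned-client count, so it escapes both of your cases. The fix is easy: note that any Type~2 iteration drives $\pi_{i_j}$ exactly to $\lambda$, and since profits are nondecreasing this can happen with $\delta>0$ at most once per AP (at most $m$ times in total); thereafter every bid to that AP has $\delta=0$ and genuinely adds a client without removal. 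This three-way split (Type~1; Type~2 with $\delta>0$; Type~2 with $\delta=0$) closes the argument and is precisely what the paper's phrasing ``$\pi_i$ is either equal to $\lambda$ or else increases by at least $\epsilon$'' together with ``an AP $i$ can receive a bid only a finite number of times after $\pi_i=\lambda$'' is encoding.
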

\begin{proof}
In order to prove the optimality and the convergence of the modified reverse auction algorithm we have to show that a) The modified reverse action algorithm iterates by satisfying $\epsilon-CS$ conditions \eqref{e-cs} and $\lambda= \max_{i=1,...,m} \pi_i$, b) The algorithms terminates after a finite number of iterations with a feasible assignment.

The proof of a) is a straight forward application of the well known \emph{theory for auction algorithms} \cite[\emph{\S 7}]{Bertsekas-98} to show that if the $e-CS$ conditions \eqref{e-cs} and $\lambda= \max_{i=1,...,m} \pi_i$ are satisfied at a start of an iteration, they are also satisfied at the end of the iteration.

To show b), we observe that an AP $i$ can receive a bid only a finite number of times after $\pi_i=\lambda$. This is true due to that in each iteration the corresponding client will be assigned to AP~$i$ without changing the association of already assigned clients to AP~$i$ (see Algorithm \ref{alg_r}). Lastly, at the end of each iteration when AP $i$ receives a bid, the profit $\pi_i$ is either equal to $\lambda$ or else increases by at least $\epsilon$. Since $\lambda$ is an upper bound in the profits throughout the algorithm, the main outcome is that each AP can receive a finite number of bids (finite termination).
\end{proof}

\emph{Proposition} \ref{prop:opt} is very important to get an insight of the behavior of the proposed algorithm in general networks, see Section~\ref{sec:numerical_results} for numerical examples.

\vspace{-0.3cm}\section{Numerical Analysis}\label{sec:numerical_results}\vspace{-0.3cm}
In this section we present a numerical evaluation of the proposed algorithm in a multi-user multi-cell environment. We compare the proposed solution approach to a)~random association policy and b)~RSSI-based policy, which is the standard association mechanism used in 802.11ad~\cite{802_11ad}~networks.

We define the SNR operating point at a distance $d$ form any AP as
\vspace{-3mm}\begin{equation} \label{eq:SNR}\nonumber
\texttt{SNR}(d) = \left\{ \begin{array}{ll}
  \displaystyle{{P}_{0}}\lambda^2/(16\pi^2N_0W) & d\leq d_0\\
 \displaystyle{{P}_{0}}\lambda^2/(16\pi^2N_0W)\cdot\left({d}/{d_0}\right)^{-\eta} &  \textrm{otherwise}\ .
   \end{array} \right.
\vspace{-3mm}\end{equation}
We consider circular cells (assumed for simplicity, without loss of generality), as depicted in~Figure~\ref{fig:system_example}. The radius of each cell $r$ is chosen such that $\texttt{SNR}(r)=10$~dB. APs are located such that the distance between any consecutive AP is $D=1.1r$. The clients are uniformly distributed among the cells, and the potential AP-client association is found as pointed out in Figure~\ref{fig:system_example}. 

\begin{figure*}[ttt!]
\vspace{-1.3cm}\begin{minipage}[t]{2.8in}
\begin{figure}[H]
\hspace{-1cm}
{\includegraphics[height=0.23\textheight]{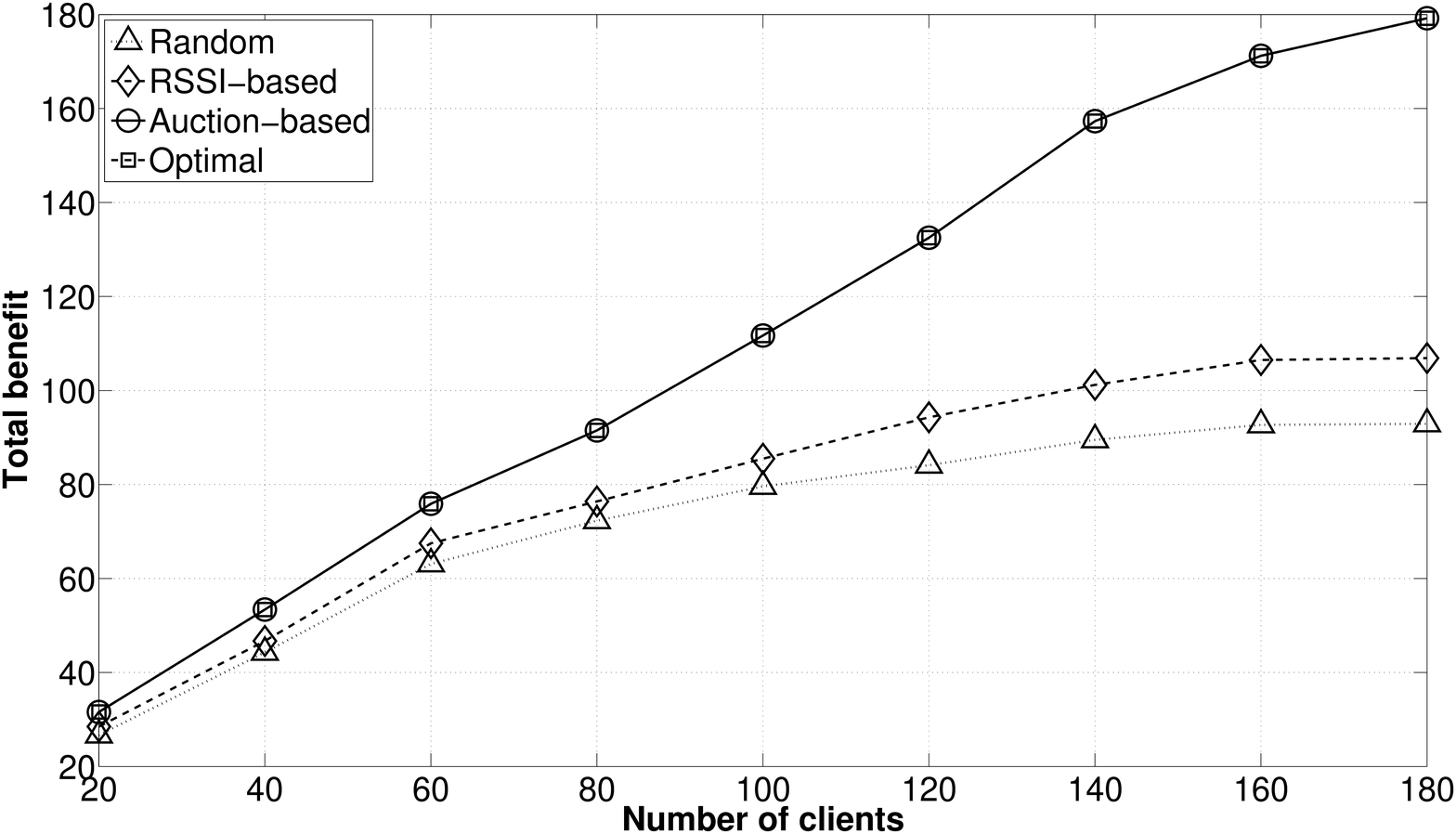}}\vspace{-5mm}
\caption{Total benefit \eqref{objfunc_multia} vs. number of clients (10 APs).}
\label{fig:1}
\end{figure}
\end{minipage}
\hfill
\begin{minipage}[t]{2.8in}\hspace{2cm}
\begin{figure}[H]
\hspace{-1.5cm}
{\includegraphics[height=0.23\textheight]{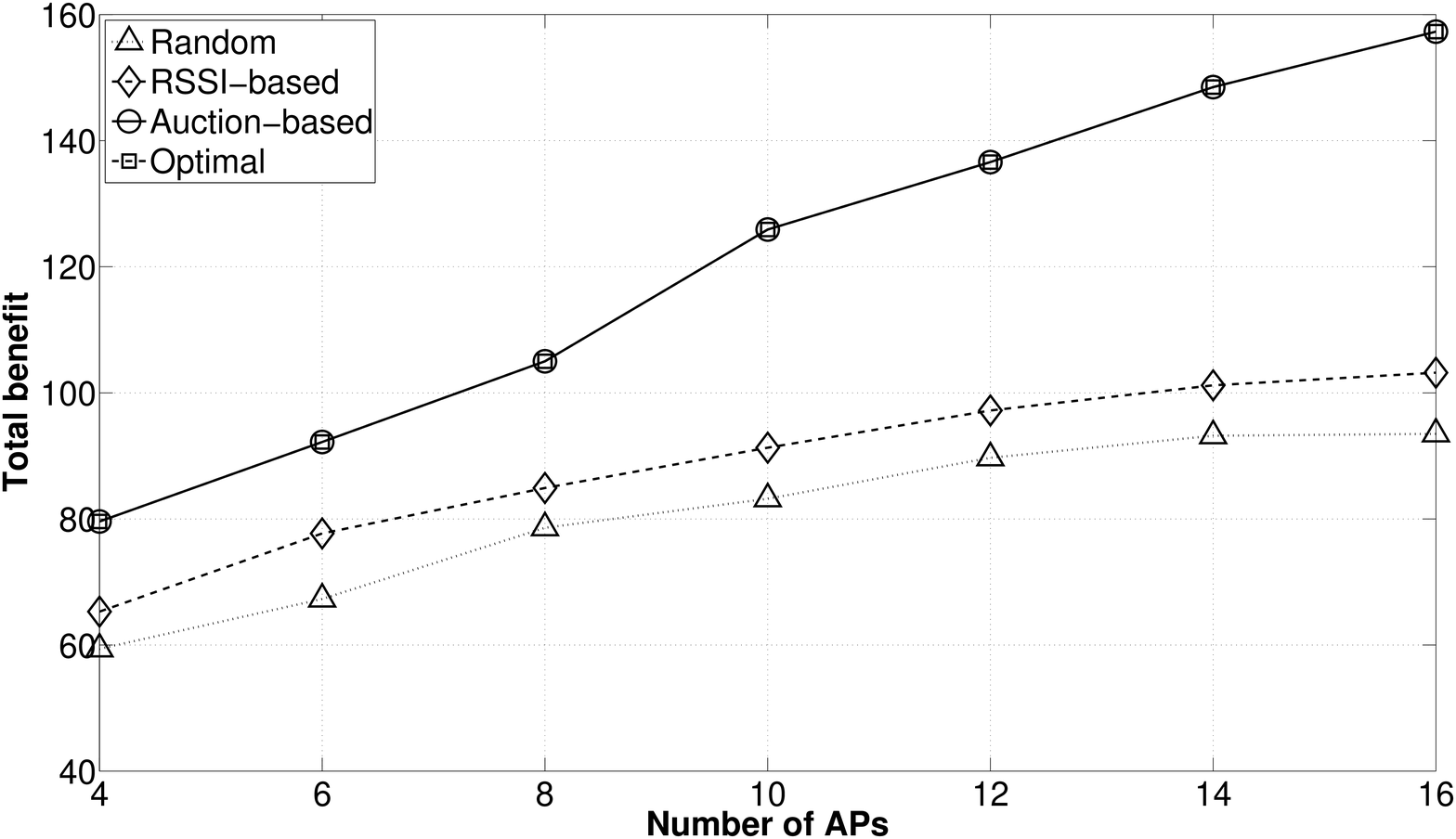}}\vspace{-5mm}
\caption{Total benefit \eqref{objfunc_multia} vs. number of APS (100 clients).}
\label{fig:2}
\end{figure}
\end{minipage}
\vspace{-1cm}
\end{figure*}

We set $\lambda=5\:$mm, $N_0=-134\:$dBm/MHz, $W=1200\:$MHz, and $d_0=1\:$m, see~\eqref{eq:AWGN}. Moreover, we set $I_j=0$, $P_{ij}=P_0=0.1\:$mW and $G^{\mathrm{Tx}}_{ij}=G^{\mathrm{Rx}}_{ij}=1$. Furthermore, we assume that $Q_j$s are uniformly distributed on $[0,100]$ Mbits/s. The algorithms were implemented in MATLAB and run on an Intel Core 2 Duo 2.40 GHz processor with 8 GB RAM.

Figure \ref{fig:1} depicts the total benefit $\sum\limits_{(i,j) \in S} R_{ij}/Q_j$ (main objective in problem \eqref{objfunc_multia}) achieved by our solution approach in comparison to the optimal solution of \eqref{objfunc_multia}, the received signal strength (RSSI) based mechanism (adopted by 60 GHz standards) and the random association methodology, where 10 APs are present in the network and the number of the supported clients varies. We observe that the auction-based approach achieves optimal performance and improves the performance of RSSI-based mechanism up to 75\% (especially in high load conditions). 
Figure~\ref{fig:2} depicts the total benefit in the network when the number of APs varies for fixed 100 clients.
The behavior of our approach is similar to Figure \ref{fig:1}, evincing its optimal and scalable performance.

Figure \ref{fig:3} shows the performance of the proposed approach for different network sizes (20 clients 2 APs, 40 clients 4 APs, etc) and parameters $\epsilon$. Note that we have $\epsilon<1/m$ for all considered cases, in order to guarantee optimal performance, see Proposition~\ref{prop:ecs}. Results show that the convergence time of the proposed association algorithm is \emph{approximately} linear in $n$.


\begin{figure}\vspace{-0.5cm}
\centering
{\includegraphics[height=0.23\textheight]{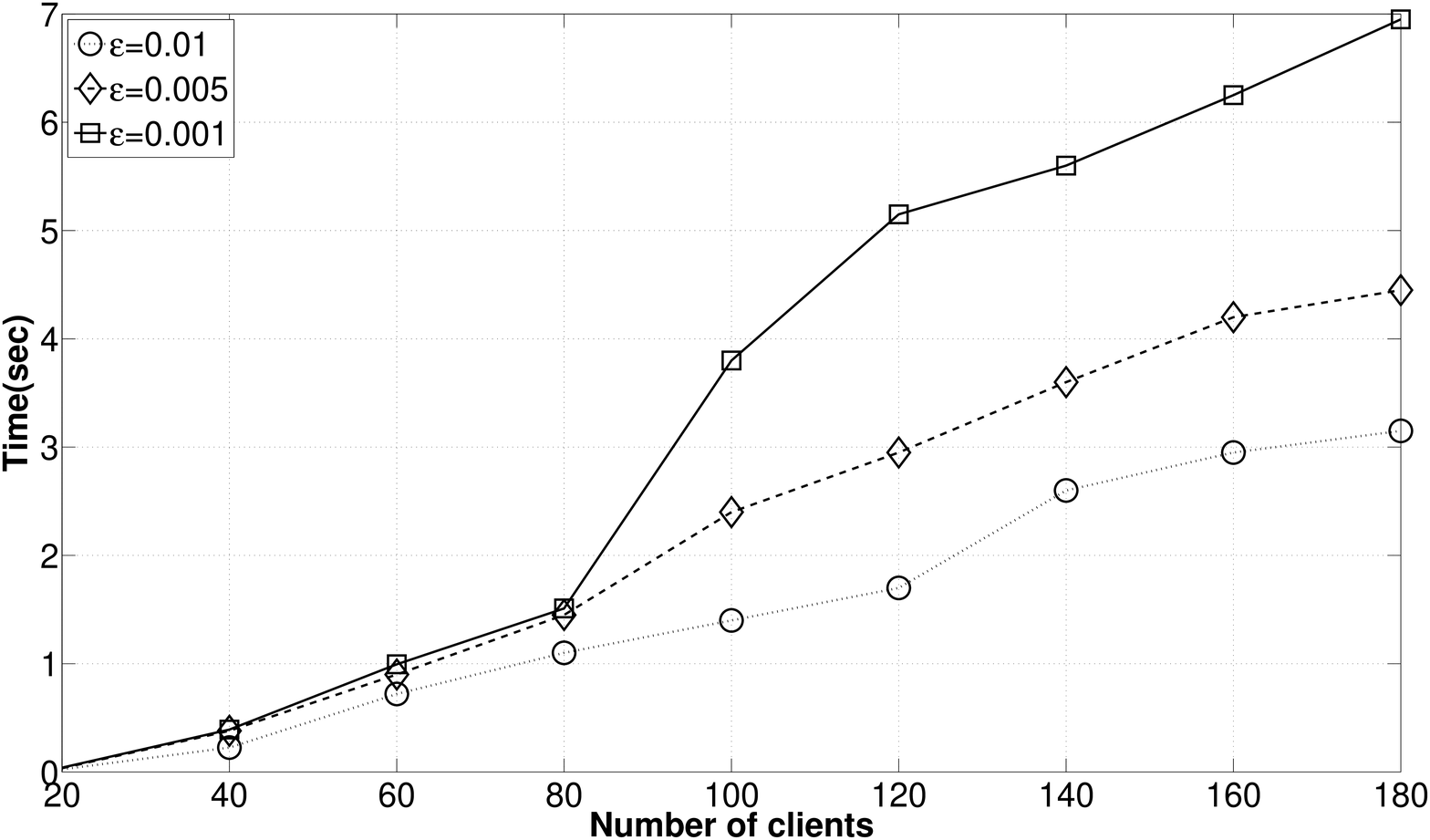}}\vspace{-5mm}
\caption{Time (sec) vs. size of the network.}
\label{fig:3}
\vspace{-1cm}
\end{figure}

\vspace{-0.3cm}\section{Conclusion}\label{sec:conclusions}\vspace{-0.35cm}
We considered the problem of optimizing the allocation of the clients to APs in mmW wireless access networks. The objective in our problem formulation was to maximize the total clients benefit. We presented a solution approach based on forward and reverse auction algorithms. Both theoretical and numerical results evinced the optimal, scalable and time efficient behaviour of our approach. Thus, it could be well applied in the forthcoming 60 GHz wireless access networks.

\ifCLASSOPTIONcaptionsoff
  \newpage
\fi

\vspace{-0.3cm}
\bibliographystyle{IEEEbib}
\bibliography{jour_short,conf_short,references,references_temp}\vspace{-0.5cm}
\end{document}